\pgfplotsset{compat=1.18}
\theoremstyle{plain}
\newtheorem{theorem}{Theorem}[section]
\newtheorem{lemma}[theorem]{Lemma}
\theoremstyle{definition}
\newtheorem{definition}[theorem]{Definition}
\newtheorem{example}[theorem]{Example}
\theoremstyle{remark}
\DeclareMathOperator*{\bbE}{{\mathbb{E}}}
\newcommand{\bbR}{\mathbb{R}}
\DeclareMathOperator*{\argmax}{argmax}
\newcommand{\bfb}{\mathbf{b}}
\newcommand{\bfx}{\mathbf{x}}
\newcommand{\LW}{\mathtt{LW}}
\newcommand{\OBJ}{\mathtt{OBJ}}
\newcommand{\UE}{\mathtt{UE}}
\newcommand{\REV}{\mathtt{REV}}
\newtheorem{assumption}{Assumption}[section]
\crefname{assumption}{assumption}{assumptions}
\Crefname{assumption}{Assumption}{Assumptions}
\DeclareMathOperator*{\E}{{\mathbb{E}}}
\DeclareMathOperator{\I}{{\mathbb{I}}}
\begin{document}

\title{Beyond Advertising: Mechanism Design for Platform-Wide Marketing Service ``QuanZhanTui''}

\author{Ningyuan Li$^{1}$\thanks{This work is done during internship at Alibaba Group.}, Zhilin Zhang$^2$, Tianyan Long$^1$, Yuyao Liu$^2$,
Rongquan Bai$^2$,\\ Yurong Chen$^{3}$\footnotemark[1], Xiaotie Deng$^1$, Pengjie Wang$^2$, Chuan Yu$^2$, Jian Xu$^{2}$\thanks{Corresponding author.}, Bo Zheng$^2$\\
$^1$CFCS, School of Computer Science, Peking University\\
$^2$Alibaba Group\\
$^3$Inria, Ecole Normale Supérieure, PSL Research University
\renewcommand\footnotemark{}
\setcounter{footnote}{3}
\thanks{Authors' emails:\{liningyuan, xiaotie\}@pku.edu.cn, romeolong@stu.pku.edu.cn, \{zhangzhilin.pt, yuyao.lyy, rongquan.br, pengjie.wpj, yuchuan.yc, xiyu.xj, bozheng\}@alibaba-inc.com, 
yurong.chen@inria.fr}
}

\date{}
\maketitle

\begin{abstract}
On e-commerce platforms, sellers typically bid for impressions from ad traffic to promote their products. However, for most sellers, the majority of their sales come from organic traffic. Consequently, the relationship between their ad spending and total sales remains uncertain, resulting in operational inefficiency. To address this issue, e-commerce platforms have recently introduced a novel platform-wide marketing service known as QuanZhanTui, which has reportedly enhanced marketing efficiency for sellers and driven substantial revenue growth for platforms. QuanZhanTui allows sellers to bid for impressions from the platform's entire traffic to boost their total sales without compromising the platform's user experience. In this paper, we investigate the mechanism design problem that arises from QuanZhanTui. The problem is formulated as a multi-objective optimization to balance sellers' welfare and platform's user experience. We first introduce the stock-constrained value maximizer model, which reflects sellers' dual requirements on marketing efficiency and platform-wide ROI. Then, we propose the Liquid Payment Auction (LPA), an auction designed to optimize the balanced objectives while accounting for sellers' requirements in the auto-bidding environment. It employs a simple payment rule based on sellers' liquid welfare, providing a clearer link between their investment and total sales. Under mild assumptions, we theoretically prove desirable properties of LPA, such as optimality and incentive compatibility. Extensive experiments demonstrate LPA's superior performance over conventional auctions in QuanZhanTui.
\end{abstract}
\footnotetext[1]{This is the authors' version of the work. The definitive version was published in Proceedings of the 31st ACM SIGKDD Conference on Knowledge Discovery and Data Mining V.2 (KDD '25), \url{https://doi.org/10.1145/3711896.3736853}.}
\section{Introduction}
In recent years, a novel marketing service \emph{QuanZhanTui} has been implemented by leading e-commerce platforms such as Taobao, Pinduoduo, and JD.com, and has been associated with reported improvements in marketing efficiency and revenue growth~\cite{AlibabaGroup_2024Q3_Results,PDD_2024Q3_Results,JD_2024Q3_Results}.
The emergence of QuanZhanTui gives rise to a set of critical questions: What distinguishes QuanZhanTui from conventional advertising? How should the mechanism be designed for it? And how does the designed mechanism perform compared to the classical ones? This paper discusses these questions.

Conventionally, like many other internet applications such as content platforms and social networks, e-commerce platforms reserve a certain portion of the platform-wide traffic as ad traffic for monetization. Sellers on e-commerce platforms can bid for impressions from ad traffic to promote sales. In other words, their ad spending can influence the sales from ad traffic. However, ad traffic generally contributes only 10\%--20\% of total sales~\cite{AlibabaGroup_2024Q3_Results,PDD_2024Q3_Results}. For most sellers, the majority of their sales come from organic traffic, such as search and recommendation. It introduces a critical challenge to sellers: the relationship between their ad spending and total sales is uncertain, resulting in operational inefficiency.





\begin{example}[{\sffamily{CONVENTIONAL ADVERTISING}}]\label{ex-1}
An apparel seller has stocked 100,000 down jackets in preparation for the winter demand surge. The seller anticipates selling 90,000 units through organic traffic and the remaining 10,000 units through ad traffic. However, organic traffic might only yield 50,000 units being sold. In this case, even if the seller increases ad spending so that 20,000 units can be sold through ad traffic, there are still 30,000 unsold units. To cover storage and labor costs (e.g., up to 50\% of the product price), the seller has to liquidate the inventory at a loss. Figure \ref{fig:fig2}(a)(b) gives a conceptual illustration of this example. $\hfill\square$

\end{example}

This persistent challenge continues to vex sellers as they are unable to directly drive sales from organic traffic. To address it, QuanZhanTui has been introduced as a platform-wide marketing service that allows sellers to bid for impressions from the platform's entire traffic without compromising the platform's user experience.

\begin{example}[{\sffamily{QUANZHANTUI}}]
With QuanZhanTui, the seller in \cref{ex-1} can directly influence her total sales on the platform. She can enhance the down jacket's competitiveness in impression acquisition throughout the entire platform by simply increasing marketing spending\footnote{We deliberately use a different term to distinguish it from ad spending in conventional advertising scenarios.}. In other words, she can have better control over platform-wide sales, so that the 100,000 units can be sold as long as enough marketing spending is invested. We note that even if the marketing spending is more than the ad spending in \cref{ex-1}, the elimination of unsold inventory guarantees a better profit. Figure \ref{fig:fig2}(d)(e) gives a conceptual illustration of this example.$\hfill\square$

\end{example}

The rationality of QuanZhanTui is based on two characteristics of e-commerce platforms. First, ads on these platforms inherently consist of products available on the platforms and are often highly targeted based on user data. Therefore, unlike those on content platforms and social networks, they exhibit contextually native integration and only cause negligible user experience interruptions. Second, sellers on these platforms seek a more certain relationship between their investment and platform-wide sales. Their profitability depends on the total sales rather than those from either the ad traffic or the organic traffic.

Contemporary internet applications, including e-commerce platforms, have been trying to optimize multiple objectives such as revenue and user experience by mixing organic items and ads in search or recommendation feeds \cite{DBLP:conf/recsys/LinCPSXSZOJ19,DBLP:conf/kdd/YanXTC20,DBLP:conf/aaai/CarrionWNLLGLCJ23,DBLP:conf/www/LiMZW0Y00D24}. QuanZhanTui is fundamentally different from this practice. The difference can be illustrated by Figure \ref{fig:fig2}(c)(f). Mixing organic items and ads still requires two separate systems to retrieve the corresponding items (organic or sponsored) based on different criteria. Then the retrieved organic items and ads are mixed in pursuit of some Pareto optimization. However, sellers still lack direct means to influence their sales from organic traffic. QuanZhanTui no longer distinguishes between organic traffic and ad traffic. Platform-wide sales of an item can be directly influenced, as long as the seller chooses to sponsor it.

QuanZhanTui essentially grants sellers operational autonomy to better trade off their marketing spending and platform-wide sales based on their profit margins and marketing strategies. Since it makes the relationship between marketing spending and platform-wide sales more predictable, it also motivates sellers to engage more actively with the platform. Consequently, it has quickly become the preferred marketing service for sellers, and meanwhile driven significant revenue growth for e-commerce platforms.

\begin{figure*}[t]
\centering
\includegraphics[width=1.0\textwidth]{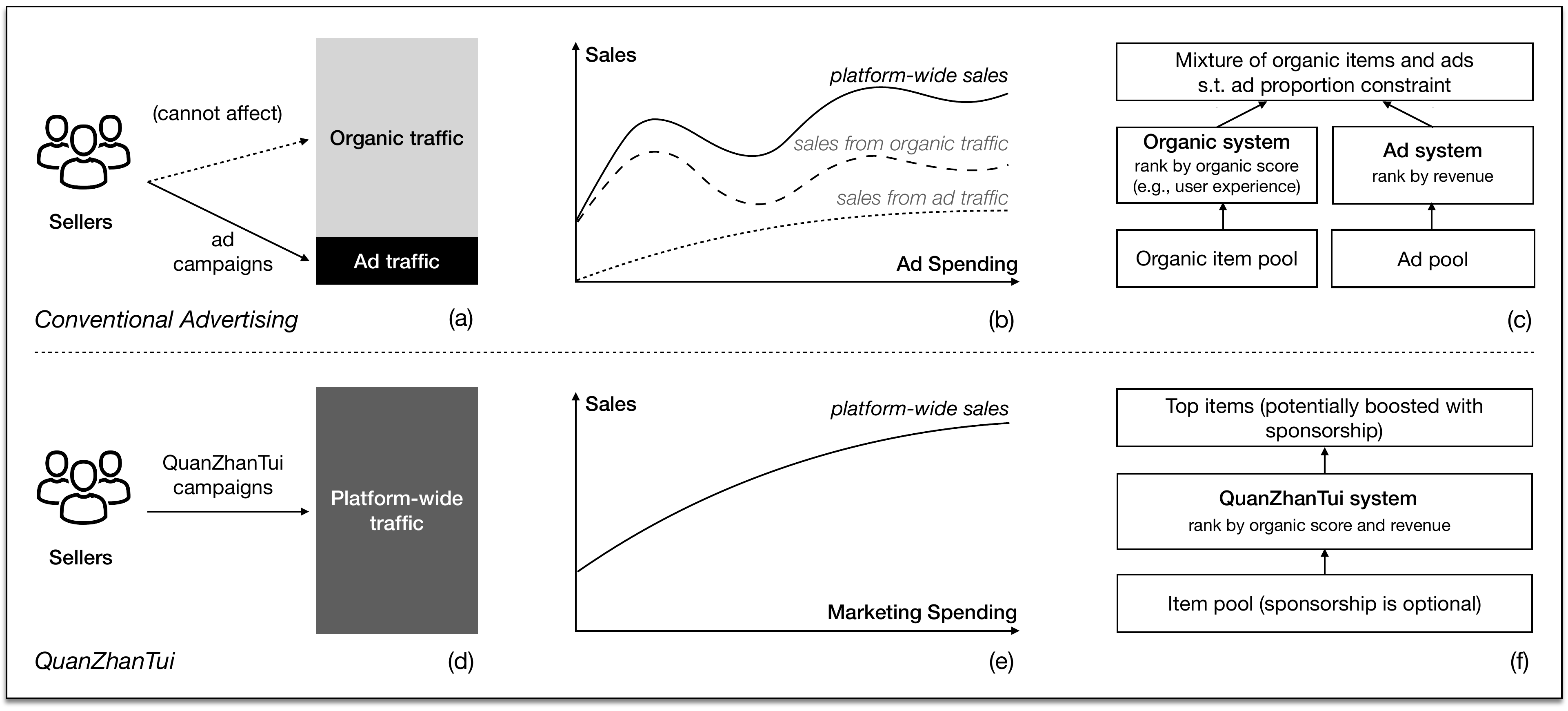}
\caption{The difference between Conventional Advertising and QuanZhanTui.}
\label{fig:fig2}
\end{figure*}

    

The success of QuanZhanTui is heavily dependent on the design of the mechanisms. However, no systematic study has investigated the mechanism design problem that arises from QuanZhanTui and proposed a solution. This paper is the first effort. The contributions of this paper can be summarized as follows.
\begin{itemize}
    \item We introduce QuanZhanTui, an emerging platform-wide marketing service in e-commerce platforms, to the research community and formulate the mechanism design problem associated with it.
    \item We propose the stock-constrained value maximizer model and the Liquid Payment Auction (LPA). Under mild assumptions, LPA theoretically leads to the optimal platform objectives and incentives sellers to be truthful in QuanZhanTui. 
    \item We conducted extensive experimental evaluations to demonstrate that LPA outperforms classical auctions in the QuanZhanTui scenario. LPA is also simple to implement, which is crucial for its industrial-scale application.
\end{itemize}

\subsection{Related Work}

\textbf{Auto-bidding under advertiser constraints} has been a pivotal research direction in computational advertising \cite{DBLP:journals/ior/BalseiroKMM21,pmlr-v119-balseiro20a,ijcai2023p314liu,10.1145/3543507.358349feng,DBLP:conf/icml/CastiglioniCK24,DBLP:conf/kdd/GuoHZWYXZZ24}. 
Our work builds on the concepts of uniform bidding and pacing equilibrium.
\cite{DBLP:conf/wine/AggarwalBM19} showed that uniform bidding is optimal for truthful auctions, and \cite{DBLP:journals/mansci/BalseiroG19} studied adaptive pacing strategy in repeated auctions. \cite{DBLP:journals/ior/ConitzerKSM22,DBLP:conf/sigecom/ChenKK21,chen2025constantinapproximabilitypacingequilibria} study pacing equilibrium in second-price auctions, showing its computational intractability. In contrast, \cite{ConitizerFPApe} showed that the pacing equilibrium in first-price auctions is efficiently computable and offers desirable properties. 

\textbf{Mechanism design under advertiser constraints} has been extensively explored in the previous literature, primarily aiming to maximize revenue \cite{DBLP:conf/sigecom/BalseiroBW16,DBLP:conf/sigecom/BalseiroDMMZ21,DBLP:conf/wine/LvBZW23} or liquid welfare \cite{DBLP:conf/icalp/DobzinskiL14,DBLP:conf/sagt/LuX17}. Some of these works addressed the incentive compatibility of private constraints, while others assumed public constraints. \cite{ZhaohuaChenWWW24} examined a budget-constrained auction design setting involving strategically manipulable priors.
In auto-bidding, the literature analyzed the efficiency guarantees of learning processes \cite{gaitonde_et_al:LIPIcs.ITCS.2023.52,pmlr-v247-lucier24a,DBLP:conf/sigecom/FikiorisT23}, and studied the efficiency of auctions in auto-bidding equilibrium \cite{DBLP:conf/www/DengMMZ21,DBLP:conf/www/Mehta22, DBLP:conf/www/LiawMZ24,DBLP:conf/www/DengMMMZZ24}. Critically, incentive issues may arise from potential misreporting of constraints in auto-bidding systems when constraints are private to advertisers \cite{DBLP:conf/ijcai/0012T24, DBLP:conf/sigecom/AlimohammadiMP23}. To our best knowledge, this paper is the first to propose a novel auction specifically designed for auto-bidding environments that incentivizes advertisers to truthfully report their constraints.

\textbf{Automated mechanism design for conventional advertising}
has been widely employed in contemporary ad systems \cite{DBLP:conf/wsdm/ZhangLZZXPYWXG21,DBLP:conf/kdd/LiuYZZRLHWCXWCZ21,DBLP:conf/kdd/LiMZ00ZXZD23,DBLP:conf/cikm/ShenSGSYWSN23}, typically aiming to optimize the trade-off among multiple objectives such as revenue and user experience \cite{DBLP:conf/sigecom/BachrachCKKK14}. 
In addition, the conventional separation of ads and organic items motivated a series of research on the design of mechanisms to mix them together so that some performance index is maximized under certain constraints \cite{DBLP:journals/tcs/LiQWY23,DBLP:conf/recsys/LinCPSXSZOJ19,DBLP:conf/kdd/YanXTC20,DBLP:conf/aaai/CarrionWNLLGLCJ23,DBLP:conf/www/LiMZW0Y00D24}. QuanZhanTui is fundamentally different as it no longer distinguishes between organic traffic and ad traffic.

\section{Problem Formulation}


\label{sec:model}

\textbf{Auction process.} We model the auction process for all traffic from the platform as a repeated multi-slot auction setting. 
There are $n$ sellers indexed by $i\in[n]$, and $T$ user requests $q_1,\cdots,q_T$. Each request corresponds to an auction allocating $m$ slots indexed by $j\in[m]$. 
Without loss of generality, we assume that each seller sells only one product. When a seller $i$'s product is allocated with the $j$-th slot in request $q_t$, the expected sales, typically measured by gross merchandise volume (GMV), are represented as $\beta_jv_{i}(q_t)$, where $\beta_j$ denotes the exposure rate of the $j$-th slot, and $v_{i}(q_t)$ denotes the GMV after exposure. The product is also associated with a user experience metric $e_i(q_t)$ after exposure, which can capture various factors such as click-through rate (CTR), GMV, or their combination, reflecting how well the product aligns with user needs. 
In conventional advertising, these slots are typically divided into organic traffic and ad traffic, with the ad auction only processing the ad traffic. 
In contrast, in QuanZhanTui, the auction operates across all traffic, controlling the allocation of all slots rather than treating organic items and ads separately. \looseness=-1
%

The platform runs an auction for each request $q_t$ to allocate each slot to exactly one seller's product. Sellers compete in the auctions through the auto-bidding service provided by the platform. For each seller $i$, the platform deploys an auto-bidder, also indexed by $i$, to bid in auctions on behalf of the seller. More specifically, for each request $q_t$, each auto-bidder $i$ submits a bid $b_{i,t}\geq 0$, constituting a bid profile $\bfb_t=(b_{1,t},\cdots,b_{n,t})$. The auction determines the allocation $(x_{i,j}(\bfb_t;q_t))_{i\in[n],j\in[m]}$ and the payments $(p_{i}(\bfb_t;q_t))_{i\in[n]}$. Here $x_{i,j}(\bfb_t;q_t)\in\{0,1\}$ indicates whether the $j$-th slot is allocated to the seller $i$'s product, and $p_{i}(\bfb_t)$ is the payment charged from seller $i$ if some slot is allocated to her. 
It is worth noting that the allocation and payment rules may depend on $v_i(q_t)$ and $e_i(q_t)$ for all $i\in[n]$, which are known to the platform once $q_t$ is given.
In addition, sellers who do not choose to use QuanZhanTui can still participate in the auction, which means $b_{i,t} = 0$. They still have the opportunity to win the auction if their $e_i(q_t)$ is large enough.
For convenience, we define a seller's allocated exposure rate as
$$x_i(\bfb_t;q_t):=\sum_{j\in[m]}\beta_j x_{i,j}(\bfb_t;q_t).$$

\textbf{Seller model.}  
To model the marketing paradigm of sellers in the QuanZhanTui scenario, we propose a novel seller model called \emph{stock-constrained value maximizer}.
Consider a seller who sells a product with finite stock, represented by a maximum potential sales value $S_i$, calculated as the stock times the unit price of the product. 
The seller aims to maximize sales, while ensuring that the realized platform-wide ROI
is at least $\tau_i$ for some target ROI ratio $\tau_i>0$. 
Importantly, sales cannot exceed the available stock. Therefore, the seller is assumed to be indifferent to any total GMV above $S_i$, which means that sales beyond this point do not contribute additional value.

Formally, seller $i$ and the corresponding auto-bidder $i$'s target can be formulated as the following optimization problem:
\begin{align*}
&\max_{(b_t)_{t\in[T]}}\min\left\{S_i,\sum_{t\in[T]}x_i(\bfb_t;q_t) v_{i}(q_t)\right\}\\
&\text{s.t. } \sum_{t\in[T]}p_{i}(\bfb_t;q_t)\leq \frac1{\tau_i}\min\{S_i,\sum_{t\in[T]}x_i(\bfb_t;q_t) v_{i}(q_t)\}.
\end{align*}

Notably, the stock constraint can be viewed as a variation of the more commonly studied budget constraint. We can observe from the constraint that $\sum_{t\in[T]}p_{i}(\bfb_t;q_t)\leq \frac{S_i}{\tau_i}$, which can be interpreted as a budget constraint $B_i=\frac{S_i}{\tau_i}$. 
Since e-commerce platforms typically require sellers to set constraints in terms of budget and ROI, we will describe the stock-constrained seller model using $B_i$ and $\tau_i$ throughout the paper, assuming that $S_i=\tau_iB_i$.
Under this assumption, the stock-constrained seller's target is equivalently expressed as 
\begin{align*}
&\max_{(b_t)_{t\in[T]}}  \min\left\{B_i,\frac1{\tau_i}\sum_{t\in[T]}x_i(\bfb_t;q_t) v_{i}(q_t)\right\}\\
&\text{s.t. } \sum_{t\in[T]}p_{i}(\bfb_t;q_t)\leq \min\{B_i,\frac1{\tau_i}\sum_{t\in[T]}x_i(\bfb_t;q_t) v_{i}(q_t)\}
\end{align*}



Based on the seller model, the auto-bidder dynamically adjusts the bidding strategy during the repeated auctions, to maximize seller $i$'s objective function while ensuring that the budget and ROI constraints are satisfied.

\textbf{Information assumptions.}
We assume that each request $q_t$ is independently drawn from a static distribution $D$ over a space $Q$. 
The distribution $D$ is unknown to the platform. However, given any request $q_t$, we assume that the platform can compute the expected sales $v_i(q_t)$ and user experience $e_i(q_t)$ precisely, without any prediction error. 

We allow the allocation rules $x_{i,j}(\bm{b}_t;q_t)$ and payment rules and $p_i(\bm{b}_t;q_t)$ to be dependent on the budget and ROI constraints set by sellers, as these information are submitted to the platform in the QuanZhanTui scenario.
Unless otherwise specified, we assume that the sellers submit their budget and ROI constraints truthfully. This assumption will be justified by the incentive compatibility result stated in \Cref{thm:IC-for-stock-constrained}.

\textbf{Platform objective.}
In the QuanZhanTui scenario,  the platform seeks to achieve fair and efficient allocation and payment outcomes throughout the auction process. For the prosperity of the platform, we define the mechanism design's objective as a balanced optimization of the sellers' welfare and the user experience. Formally, the objective is expressed as a linear combination of total liquid welfare and user experience metrics, thereby incorporating both economic and organic factors:
\begin{align}
    \OBJ=\LW+\kappa\cdot \UE, 
\end{align}
where $\kappa>0$ is a constant coefficient. 

The liquid welfare is a fair metric of overall welfare widely adopted for budget and ROI constrained sellers, defined as $$\LW=\sum_{i\in[n]}\min\left\{B_i,\frac1{\tau_i}\sum_{t\in[T]}x_i(\bfb_t;q_t) v_{i}(q_t)\right\}.$$
It serves as a reasonable welfare metric for stock-constrained sellers.

The total user experience metric is defined as
$$\UE=\sum_{t\in[T]}\sum_{i\in[n]}x_i(\bfb_t;q_t) e_{i}(q_t).$$
%
Here we also define the platform revenue, which is the total payment from sellers: $$\REV=\sum_{t\in[T]}\sum_{i\in[n]}p_{i}(\bfb_t;q_t).$$

\textbf{Auto-bidding environment with uniform bidding.}
Generally speaking, the mechanism for QuanZhanTui is a joint system consisting of the auction and the auto-bidding algorithms.
However, to simplify the problem, we restrict the auto-bidding system to use uniform bidding strategies and expect the auction to perform well under the equilibrium state of the auto-bidding system. Our theoretical results will show that this assumption does not compromise the outcome. \looseness=-1

Formally, a uniform bidding strategy $\bfb$ is expressed in the form of $\smash{b_{i,t}=a_i^{(t)}\frac1{\tau_i}v_i(q_t)}$, where the multiplicative factor $\smash{a_i^{(t)}}$ is called the pacing factor. Typically, $\smash{a_i^{(t)}}$ is updated by some pacing algorithm after each auction round, according to the auction outcome and the seller $i$'s target and constraints. The definition of equilibrium will be presented in \Cref{sec:analysis}.

\textbf{Auction design problem.}
Our goal is to design an auction that optimizes the platform objective $\OBJ$ throughout the platform-wide traffic allocation process.
It should be compatible with the auto-bidding environment adopting uniform bidding, and perform optimally under the equilibrium state.
It is also important for the auction to ensure the joint system's incentive compatibility (IC) for sellers, preventing strategic manipulation when setting budget and ROI constraints.
Additionally, to provide a clearer relationship between marketing spending and platform-wide sales, a simple and stable payment rule is desirable.


\section{Liquid Payment Auction}
In this section, we propose the \emph{Liquid Payment Auction} (LPA), which is a simple yet optimal auction designed to maximize the platform's objective in the QuanZhanTui scenario. We also provide an intuitive explanation of the motivation behind the auction rules of LPA. In the next section, we will establish the optimality of LPA through theoretical analysis. \looseness=-1
\subsection{Auction Rule of LPA}
\begin{definition}[LPA]
Given the value functions $v_1(\cdot),\cdots,v_n(\cdot)$ and the ROI constraints $\tau_1,\cdots,\tau_n$, the Liquid Payment Auction (LPA) for each request $q_t$ is defined as follows:
\begin{enumerate}
    \item For each product $i\in[n]$, calculate a rank score $s_{i,t}=b_{i,t}+\kappa\cdot e_{i}(q_t)$.
    \item Sort the $n$ products by the rank scores $s_{1,t},\cdots,s_{n,t}$, allocate the $m$ slots to the products in descending order of rank score.
    \item For each product $i$, if it is allocated with the $j$-th slot, charge $\frac1{\tau_i}v_i(q_t)\beta_j$ from seller $i$.
\end{enumerate}
Formally, let $\mathrm{rank}_{i}(\bfb_t;q_t)=|\{i'\in[n]\setminus\{i\}:s_{i',t}>s_{i,t}\}|+1$ denote the rank of $i$'s product in request $q_t$, the allocation and payment rules can be written as
\begin{align*}
    &x_{i,j}(\bfb_t;q_t)=\I[\mathrm{rank}_{i}(\bfb_t;q_t)=j],\\
    &p_i(\bfb_t;q_t)=\sum_{j\in[m]}\frac1{\tau_i}v_i(q_t)\beta_jx_{i,j}(\bfb_t;q_t)=\frac1{\tau_i}v_i(q_t)x_i(\bfb_t;q_t).
\end{align*}
\end{definition}

One can see that the payment rule of LPA directly guarantees the satisfaction of sellers' ROI constraints.
To ensure the satisfaction of sellers' budget constraints, if a seller's budget is exceeded despite the adjustments by the auto-bidder, we truncate its total payment to the budget. This is most likely to occur for products with strong organic user preference, where the sales can reach the seller's stock even with minimal or no marketing investment. Specifically, the final total payment from each seller $i$ will be $\min\{B_i,\frac1{\tau_i}\sum_{t\in[T]}x_{i}(\bfb_t;q_t)v_i(q_t)\}$, which is exactly equal to seller $i$'s contribution in the liquid welfare. This payment rule is feasible in practice, as sellers submit their bidding constraints to the platform when participating in the QuanZhanTui service.

We remark that LPA shares some structural similarities with the bid-discount first price auction (BDFPA) introduced by \cite{ZhaohuaChenWWW24}, 
in which allocation ranks are determined by linearly scaled bids, while payments are based on the original bids. The key distinction lies in the source of the scaling: in BDFPA, the scaling factors are chosen by the auction designer, whereas in LPA, they emerge endogenously from auto-bidders applying uniform bidding strategies.


\subsection{Theoretical Motivation}
The main characteristic of LPA is its payment rule, which is directly determined by the seller's liquid welfare, rather than the bids submitted by the auto-bidders. 
To provide an intuitive understanding of this payment rule and the design of LPA, we describe its motivation from the perspective of maximizing the platform objective in traffic allocation.\looseness=-1

Suppose the platform directly controls the allocation of all traffic to maximize the platform objective, temporarily ignoring the bids and payments. We view $\smash{x_i^{(t)}=x_{i}(\bfb_t;q_t)}$ as the decision variable, and the feasible region of the vector $\smash{\bfx^{(t)} = (x_1^{(t)},\cdots,x_n^{(t)})}$ forms a convex set denoted by $\smash{X^{(t)}}$. Then we can write down the optimization problem of platform objective as
\begin{align*}
&\max_{(\bfx^{(t)})}\sum_{i\in[n]}\left(\min\left\{B_i, \frac1{\tau_i}\sum_{t\in[T]}v_i(q_t)x_{i}^{(t)}\right\}+\kappa\sum_{t\in[T]}e_i(q_t)x_{i}^{(t)}\right)\\
&\text{s.t. }\forall t\in[T],~ \bfx^{(t)}\in X^{(t)}
\end{align*}

Using the Lagrange multiplier method, we can reformulate it as
\begin{align*}
&\max_{(\bfx^{(t)})}\sum_{i\in[n]}\left( \left( \lambda_i\frac1{\tau_i}
\sum_{t\in[T]}v_i(q_t)x_{i}^{(t)}+(1-\lambda_i)B_i \right)+
\kappa\sum_{t\in[T]}e_i(q_t)x_{i}^{(t)} \right)\\
&\text{s.t. }\forall t\in[T],~ \bfx^{(t)}\in X^{(t)}
\end{align*}
where each $\lambda_i\in[0,1]$ is a dual variable.

Under the optimal dual variables $\lambda_i^*$, the optimal primal variables are given by the following maximization problem, 
\[
\bfx^{(t)*}={\argmax}_{\bfx^{(t)}\in X^{(t)}}\sum_{i\in[n]}\left(\lambda_i^*\frac1{\tau_i}v_i(q_t)+\kappa e_i(q_t)\right)x_{i}^{(t)}.
\]

Notably, if we assume $b_{i,t}=\lambda_i^*\frac1{\tau_i}v_i(q_t)$, then the optimal allocation $\smash{\bfx^{(t)*}}$ can be obtained by allocating the slots in the descending order of the products' rank scores $s_{i,t}=b_{i,t}+\kappa e_i(q_t)$. This motivates the allocation rule of LPA.

Furthermore, if we assume that each auto-bidder $i$ adopts a uniform bidding strategy using $\lambda_i^*$ as a multiplicative pacing factor applied on $\smash{\frac1{\tau_i}v_i(q_t)}$, and let the payment from seller $i$ be $\smash{\sum_{t\in[T]}\frac1{\tau_i}v_i(q_t)}$, then by KTT conditions, $\lambda_i^*$ will be exactly the optimal pacing factor satisfying seller $i$'s budget constraint. This motivates the payment rule of LPA and the assumption of uniform bidding.

A formal version of these arguments is presented in \Cref{thm:platform-objective-optimal}.

\section{Theoretical Analysis of LPA}
\label{sec:analysis}
In this section, we theoretically demonstrate the advantageous properties of LPA in an auto-bidding environment adopting a uniform bidding strategy. For convenience, we introduce a Bayesian problem formulation. Instead of $T$ sequential requests, we consider the expected outcome when $q\sim D$, and relax each seller $i$'s constraints to the expectation form. A stock-constrained seller $i$'s target is then formulated as
\begin{align*}
    &\max_{b_{i}:Q\to\mathbb{R}}  \min\left\{\frac{B_i}{T}, \frac1{\tau_i}\E_{q\sim D}\left[x_{i}\left(b_1(q),\cdots,b_n(q);q\right) v_{i}(q)\right]\right\}\\
    &\text{s.t. }  \E_{q\sim D}\left[p_{i}\left(b_1(q),\cdots,b_n(q);q\right)\right]\leq \\
    &\quad\quad\min\left\{\frac{B_i}{T}, \frac1{\tau_i}\E_{q\sim D}\left[x_{i}\left(b_1(q),\cdots,b_n(q);q\right) v_{i}(q)\right]\right\}.
\end{align*}

\subsection{Pacing Equilibrium}
We assume that every auto-bidder $i$ adopts a uniform bidding strategy, that is, there exists $a_i\in[0,1]$ such that $b_i(q)=a_i\cdot\frac1{\tau_i}v_i(q)$ for all $q\in Q$. 
Under this restriction, the bidding strategy of each auto-bidder $i$ can be represented by the pacing factor $a_i$. 
We call $\bm{a}=(a_1,\cdots,a_n)\in[0,1]^n$ a pacing profile.
For convenience, we use $a_{-i}$ to denote the pacing profile of all sellers except seller $i\in[n]$.
We also slightly abuse the notation to define $$x_{i}\left(\bm{a};q\right)=x_{i}\left(a_1\cdot \frac1{\tau_1}v_1(q),\cdots,a_n\cdot \frac1{\tau_n}v_n(q);q\right),$$ and $$p_{i}\left(\bm{a};q\right)=p_{i}\left(a_1\cdot \frac1{\tau_1}v_1(q),\cdots,a_n\cdot \frac1{\tau_n}v_n(q);q\right).$$

Intuitively, since the ROI constraint is always satisfied in LPA, the adjustment of pacing factor $a_i$ aims to obtain the highest sales for seller $i$ within the budget constraint $B_i$. Raising $a_i$ will cause the payment and the obtained sales to increase simultaneously. Therefore, if a seller's total payment is below her budget $B_i$, she will increase $a_i$ until the payment reaches $B_i$ or $a_i$ hits the upper bound of $1$. Conversely, if a seller's total payment exceeds her budget, she will decrease $a_i$ until the payment reaches $B_i$ or $a_i$ hits the lower bound $0$.
A pacing equilibrium is a pacing profile under which every buyer's pacing factor is already optimal, i.e., no seller has incentive to adjust the pacing factor $a_i$. Formally, a pacing equilibrium is defined as follows. \looseness=-1
\begin{definition}[pacing equilibrium]
\label{def:pacing equilibrium}
A pacing profile $\bm{a}\in[0,1]^n$ is called a pacing equilibrium, if for each $i\in[n]$, $a_i$ satisfies that:
\begin{enumerate}
\item If $a_i>0$, then $\E_{q\sim D}\left[p_{i}\left(\bm{a};q\right)\right]\leq \frac{B_i}{T}$.
\item If $a_i<1$, then $\E_{q\sim D}\left[p_{i}\left(\bm{a};q\right)\right]\geq \frac{B_i}{T}$.
\end{enumerate}
\end{definition}



For simplicity, we make the following assumption, which guarantees that there is almost surely no ties, so that we can omit the discussion about tie-breaking issues.
\begin{assumption}
\label{assumption:nomass}
    For any $i_1\neq i_2\in[n]$, for any $a_{i_1},a_{i_2},\kappa\in\mathbb{R}$, it is assumed that
    \[\Pr_{q\sim D}[a_{i_1}v_{i_1}(q)+\kappa e_{i_1}(q)=a_{i_2}v_{i_2}(q)+\kappa e_{i_2}(q)]=0.\]
\end{assumption}
Note that this is a weaker assumption than the common assumption that the joint distribution of $(v_1(q),\cdots,v_n(q),e_1(q),\cdots,e_n(q))$ is a continuous distribution in $\mathbb{R}^{2n}$.

Under this assumption, we prove the following basic properties about pacing equilibrium under LPA:
\begin{theorem}
\label{thm:pacing equilibrium exist-unique}
Under \Cref{assumption:nomass}, the following properties about the pacing equilibrium under LPA hold:
    \begin{enumerate}
        \item At least one pacing equilibrium exists. 
        \item Moreover, there exists a unique pacing equilibrium $\bm{a}^*\in[0,1]^n$ that is point-wise maximum among all pacing equilibria.
        \item All pacing equilibria induce identical allocation and payment outcomes for every request.
    \end{enumerate}
\end{theorem}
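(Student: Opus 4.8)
The plan is to view each auto-bidder's choice of pacing factor as a one-dimensional optimization against a monotone, continuous ``spend curve,'' get existence from a fixed-point theorem, and then exploit the greedy (rank-by-score) structure of LPA to force all equilibria to coincide. The key objects I would set up first: write $g_i(\bm a):=\E_{q\sim D}[p_i(\bm a;q)]=\tfrac1{\tau_i}\E_{q\sim D}[v_i(q)x_i(\bm a;q)]$ for seller $i$'s expected spend, and $s_i(\bm a;q)=a_i\tfrac1{\tau_i}v_i(q)+\kappa e_i(q)$ for the rank scores. Two structural facts carry the argument. \emph{Monotonicity}: since LPA assigns higher-exposure slots to higher scores, raising $a_i$ can only weakly improve $i$'s rank at every $q$ and raising any $a_j$ ($j\ne i$) can only weakly worsen it, so $g_i$ is nondecreasing in $a_i$ and nonincreasing in each $a_j$. \emph{Continuity}: applying \Cref{assumption:nomass} with the rescaled coefficients $a_i/\tau_i$, for any fixed $\bm a$ the probability of a tie in rank scores is $0$, so $x_i(\bm a;q)$ is locally constant in $\bm a$ for a.e.\ $q$, and dominated convergence makes $g_i$ jointly continuous on $[0,1]^n$. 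I would also record that, by the rearrangement inequality, for a.e.\ $q$ the LPA allocation $\bm x(\bm a;q)$ is the (essentially unique) maximizer of $\sum_i s_i(\bm a;q)x_i$ over all feasible allocations for request $q$.

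\emph{Existence (item 1).} Under LPA the feasibility constraint in seller $i$'s Bayesian program reduces exactly to $g_i(\bm a)\le B_i/T$, and on that set her objective $\min\{B_i/T,g_i(\bm a)\}$ equals $g_i(\bm a)$, nondecreasing in $a_i$; hence her best responses given $a_{-i}$ form $\Phi_i(a_{-i})=\{a_i\in[0,1]:(a_i>0\Rightarrow g_i\le B_i/T)\text{ and }(a_i<1\Rightarrow g_i\ge B_i/T)\}$, which is precisely the local condition in \Cref{def:pacing equilibrium}. Monotonicity and continuity of $g_i$ make $\Phi_i(a_{-i})$ a nonempty closed subinterval of $[0,1]$, and a pacing equilibrium is a fixed point of the correspondence $\Phi=\prod_i\Phi_i$ on $[0,1]^n$. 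I would then check that $\Phi$ is nonempty- and convex-valued with closed graph and invoke Kakutani's theorem.

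\emph{Identical outcomes (item 3), the crux.} Take two equilibria $\bm a,\bm a'$. For each $q$, optimality of the greedy allocation gives $\sum_i s_i(\bm a;q)\big(x_i(\bm a;q)-x_i(\bm a';q)\big)\ge 0$ together with the symmetric inequality with $\bm a,\bm a'$ swapped; adding them, the $\kappa e_i(q)$ terms cancel and I am left with $\sum_i(a_i-a_i')\tfrac1{\tau_i}v_i(q)\big(x_i(\bm a;q)-x_i(\bm a';q)\big)\ge 0$ for every $q$. Taking expectations turns the $i$-th term into $(a_i-a_i')\big(g_i(\bm a)-g_i(\bm a')\big)$, which is $\le 0$: if $a_i>a_i'$ then $a_i>0$ forces $g_i(\bm a)\le B_i/T$ while $a_i'<1$ forces $g_i(\bm a')\ge B_i/T$ (symmetrically if $a_i<a_i'$). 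So the expectation is simultaneously $\ge 0$ and $\le 0$, hence $0$, and each term is $0$; moreover the nonnegative integrand has zero mean and so vanishes for a.e.\ $q$. Feeding this back into the two greedy-optimality inequalities (nonnegative, summing to zero) forces each to vanish a.e., i.e.\ $\bm x(\bm a';q)$ also maximizes $\sum_i s_i(\bm a;q)x_i$; by a.s.\ uniqueness of that maximizer, $\bm x(\bm a';q)=\bm x(\bm a;q)$, and hence $p_i(\bm a';q)=p_i(\bm a;q)$, for a.e.\ $q$. I expect this step to be the main obstacle: the crucial observation is that summing the two revealed-preference inequalities eliminates the experience terms and collapses the whole identity onto the spend functions, which the two-sided pacing conditions then pin down.

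\emph{Maximal equilibrium (item 2).} By item 3 every equilibrium induces the same allocation $\bm x^*$ and the same spends $\sigma_i:=g_i$, so the equilibrium conditions merely force $a_i=0$ when $\sigma_i>B_i/T$ and $a_i=1$ when $\sigma_i<B_i/T$. The plan is to show the equilibrium set is a join-semilattice: for equilibria $\bm a,\bm a'$ and $\bar{\bm a}=\bm a\vee\bm a'$, a short case check (on whether $\bar a_i$ and $\bar a_{i'}$ are inherited from $\bm a$ or from $\bm a'$, using that $a\mapsto a\tfrac1{\tau_i}v_i(q)$ is nondecreasing) shows that any strict comparison $s_i>s_{i'}$ holding under both $\bm a$ and $\bm a'$ still holds under $\bar{\bm a}$; hence $\bar{\bm a}$ induces $\bm x^*$, so $g_i(\bar{\bm a})=\sigma_i$ and the equilibrium conditions transfer coordinatewise to $\bar{\bm a}$. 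The equilibrium set is also closed (continuity of $g_i$) and compact, so $\sum_i a_i$ attains a maximum there at some $\bm a^*$; were some equilibrium $\bm a'$ to exceed $\bm a^*$ in a coordinate, $\bm a^*\vee\bm a'$ would be an equilibrium of strictly larger coordinate sum, a contradiction. Thus $\bm a^*$ dominates every equilibrium coordinatewise and is the unique pointwise-maximum pacing equilibrium.
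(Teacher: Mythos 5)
Your proof is correct, and each of the three parts takes a genuinely different route from the paper's. For existence, the paper applies Tarski's fixed-point theorem to the monotone selection $f_i(a_{-i})=\max\bigl(\{a_i\in(0,1]:\E_{q\sim D}[p_i(a_i,a_{-i};q)]\le B_i/T\}\cup\{0\}\bigr)$, leaning mainly on monotonicity of expected spend in $a_{-i}$; you instead apply Kakutani to the best-response correspondence, which additionally needs the joint continuity and convex-valuedness you establish. Both work, but Tarski is the lighter tool and also powers the paper's proof of item 2, which realizes the maximal equilibrium as the pointwise maximum of the prefixed-point set $\Omega=\{\bm a:\forall i,\ f_i(a_{-i})\ge a_i\}$, shown closed under coordinatewise joins. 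You instead prove item 3 first and derive item 2 from outcome-equivalence via a join-semilattice argument on the equilibrium set itself; this reversal of order is logically fine and arguably more transparent. For item 3 --- the crux --- the paper proves $\Gamma(\bm a;\bm a^*;q)=\sum_i\delta_i\frac1{\tau_i}v_i(q)(x_i(\bm a^*;q)-x_i(\bm a;q))\ge 0$, with strictness whenever allocations differ, by an explicit bubble-sort/adjacent-swap decomposition; you obtain exactly the same quantity more cleanly by summing the two revealed-preference inequalities for the score-maximizing allocations, which cancels the $\kappa e_i(q)$ terms, and your step showing each term $(a_i-a_i')(g_i(\bm a)-g_i(\bm a'))\le 0$ from the two-sided pacing conditions is the same mechanism the paper uses, applied to all coordinates simultaneously rather than extracting one contradicting index. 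The only point where you lean on an asserted fact is the ``essential uniqueness'' of the exposure-vector maximizer of $\sum_i s_i(\bm a;q)x_i$ under distinct scores, needed to pass from ``$\bm x(\bm a';q)$ also maximizes'' to ``$\bm x(\bm a';q)=\bm x(\bm a;q)$''; this is true, but it is precisely what the paper's adjacent-swap lemma proves, so a complete write-up should include that exchange argument (noting that equal $\beta_j$'s make the slot assignment non-unique while the induced exposure vector stays unique).
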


\begin{proof}
\textbf{Property 1:}
We show the existence of the pacing equilibrium by Tarski's fixed point theorem.

    For each $i\in[n]$, we define function
    \begin{align*}
        f_i(a_{-i})=\max(\{a_i\in(0,1]:\bbE_{q\sim D}[p_i(a_i,a_{-i};q)]\leq B_i/T\}\cup\{0\}).
    \end{align*} Note that by \Cref{assumption:nomass}, $\bbE_{q\sim D}[p_i(a_i,a_{-i};q)]$ is continuous in $a_i$, so $f_i(a_{-i})$ is well-defined for any $a_{-i}\in[0,1]^{n-1}$.

    Observe that $f_i(a_{-i})$ is weakly increasing in any component of $a_{-i}$, because for any $a_i$ and $q$, $p_i(a_i,a_{-i};q)$ is weakly decreasing in any component of $a_{-i}$.
    
    Then we define the mapping $\vec{f}(\bm{a})=(f_i(a_{-i}))_{i\in[n]}$, which is a monotone mapping from $[0,1]^n$ to $[0,1]^n$. By Tarski's fixed point theorem, $\vec{f}$ has at least one fixed point $\bm{a}$, i.e. $\bm{a}=\vec{f}(\vec{a})$.

    To prove the existence of pacing equilibrium, it remains to show that if $\vec{a}=f(\vec{a})$, then $\vec{a}$ is a pacing equilibrium. We check the conditions in \Cref{def:pacing equilibrium}.
    For each $i\in[n]$, if $a_i>0$, then we have $\bbE_{q\sim D}[p_i(a_i,a_{-i};q)]\leq B_i/T$ by the definition of $f_i$; if $a_i<1$, then by definition of $f_i$, it holds for all $a_i'\in(a_i,1]$ that $\bbE_{q\sim D}[p_i(a_i',a_{-i};q)]< B_i/T$, so $\bbE_{q\sim D}[p_i(a_i,a_{-i};q)]\leq B_i/T$ by continuity. Therefore, $\vec{a}$ is a pacing equilibrium.

\textbf{Property 2:}
    Now we prove that there exists a unique pacing equilibrium $\bm{a}^*$ that is point-wise maximal. To prove this, we define a set of pacing profiles $\Omega=\{\vec{a}\in[0,1]^n:\forall i\in[n],~f_i(a_{-i})\geq a_i\}$. Roughly speaking, $\Omega$ is the set of all pacing profiles under which no bidder has incentive to decrease the pacing factor. It's worth noting that the set of all pacing equilibria is a subset of $\Omega$ by definition.
    
    We first show that $\Omega$ is closed under the point-wise maximum operation. For any two pacing profiles $\bm{a},\bm{a}'\in\Omega$, define $\bm{a}''=(\max(a_1,a_1'),\cdots,\max(a_n,a_n'))$, we prove that $\bm{a}''\in\Omega$. Specifically, for each $i\in[n]$, we have either $a''_i=a_i$ or $a''_i=a_i'$. If $a''_i=a_i$ holds, we have $f_i(a_{-i})\geq a_i=a''_i$. Observe that $a''_{-i}$ is point-wise weakly larger than $a_{-i}$, by the monotonicity of $f_i$, we have $f_i(a''_{-i})\geq f_i(a_{-i})\geq a''_i$. If $a''_i=a_i'$ holds, we similarly have $f_i(a''_{-i})\geq f_i(a'_{-i})\geq a'_i=a''_i$. Therefore we have $f_i(a''_{-i})\geq a''_i$ for all $i\in[n]$, i.e., $\bm{a}''\in\Omega$.

    Next, let $\bm{a}^*$ be the point-wise maximum of $\Omega$, i.e., $a^*_i=\max_{a\in\Omega}a_i$ for each $i\in[n]$. We know that $\bm{a}^*\in\Omega$, and that $\bm{a}^*$ is point-wise weakly larger than any pacing equilibrium, so it remains to prove that $\bm{a}^*$ is a pacing equilibrium, or equivalently, $\vec{f}(\bm{a}^*)=\bm{a}^*$. 
    Suppose for the contradiction that $\vec{f}(\bm{a}^*)\neq\bm{a}^*$. We denote $\bm{a}'=\vec{f}(\bm{a}^*)$. Since $\bm{a}^*\in\Omega$, for any $i\in[n]$, we have $a'_i=f_i(a^*_{-i})\geq a^*_i$. It follows that $a'_{-i}$ is point-wise weakly larger than $a^*_{-i}$, so by monotonicity we have $f_i(a'_{-i})\geq f_i(a^*_{-i})=a'_i$. Then we have $\bm{a}'\in\Omega$, contradicting with that $\bm{a}^*$ is the point-wise maximum of $\Omega$. So we have $f(\bm{a}^*)=\bm{a}^*$, and thus $\bm{a}^*$ is a pacing equilibrium. 

\textbf{Property 3:}
Let $\bm{a}^*$ denote the point-wise maximum pacing equilibrium. We prove that for any pacing equilibrium $\bm{a}\in[0,1]^n$, $\bm{a}$ and $\bm{a}^*$ induce the same outcome in all requests.

Define $\delta_i=a^*_i-a_i$ for each $i\in[n]$.
We use the following lemma, which will be proved at the end of this proof.
\begin{lemma}
\label{claim:1}
Define $\Gamma(\bm{a};\bm{a}^*;q)=\sum_{i\in[n]}\delta_i\cdot\frac1{\tau_i}v_i(q)(x_i(\bm{a}^*;q)-x_i(\bm{a};q))$.
When $q\sim D$, it holds almost surely that $\Gamma(\bm{a};\bm{a}^*;q)\geq 0$. Moreover, define events $E_1=\{\exists i\in[n], x_i(\bm{a}^*;q)\neq x_i(\bm{a};q)\}$, and $E_2=\{\Gamma(\bm{a};\bm{a}^*;q)>0\}$, then $E_1$ implies $E_2$, i.e., $\Pr_{q\sim D}[E_1\cap \neg E_2]=0$.
\end{lemma}
To prove property 3, suppose for contradiction that $\Pr_{q\sim D}[\exists i\in[n], x_i(\bm{a}^*;q)\neq x_i(\bm{a};q)]>0$, then by \Cref{claim:1}, we have
\begin{align*}
\Pr_{q\sim D}[\sum_{i\in[n]}\delta_i\cdot\frac1{\tau_i}v_i(q)(x_i(\bm{a}^*;q)-x_i(\bm{a};q))\geq 0]=1,\\
\Pr_{q\sim D}[\sum_{i\in[n]}\delta_i\cdot\frac1{\tau_i}v_i(q)(x_i(\bm{a}^*;q)-x_i(\bm{a};q))>0]>0.
\end{align*}
This leads to $\E_{q\sim D}[\sum_{i\in[n]}\delta_i\cdot\frac1{\tau_i}v_i(q)(x_i(\bm{a}^*;q)-x_i(\bm{a};q))]>0$. Thus, there exists some $i\in[n]$ such that $$\E_{q\sim D}[\delta_i\cdot\frac1{\tau_i}v_i(q)(x_i(\bm{a}^*;q)-x_i(\bm{a};q))]>0.$$
It follows that $\delta_i>0$ and $\E_{q\sim D}[\frac1{\tau_i}v_i(q)(x_i(\bm{a}^*;q)-x_i(\bm{a};q))]>0$. The latter inequality implies that $\E_{q\sim D}[p_i(\bm{a}^*;q)]>\E_{q\sim D}[p_i(\bm{a};q)]$, while $\delta_i>0$ implies that $0\leq a_i<a^*_i\leq 1$. 

However, since $a_i<1$ and $a_i^*>0$, we have $\E_{q\sim D}[p_i(\bm{a};q)]\geq \frac{B_i}{T}$ and $\E_{q\sim D}[p_i(\bm{a}^*;q)]\leq \frac{B_i}{T}$ by the definition of pacing equilibrium. Then we have $\E_{q\sim D}[p_i(\bm{a};q)]\geq\E_{q\sim D}[p_i(\bm{a}^*;q)]$, which contradicts. 

Therefore, when $q\sim D$, with probability $1$ we have $x_i(\bm{a}^*;q)=x_i(\bm{a};q)$. This also immediately implies $p_i(\bm{a}^*;q)=p_i(\bm{a};q)$.

Now it remains to prove \Cref{claim:1}. Denote the rank scores under a pacing profile by $s_i(\bm{a};q)=a_i\frac1{\tau_i}v_i(q)+\kappa\cdot e_i(q)$.
When $q\sim D$, by \Cref{assumption:nomass}, we can assume without loss of generality that there are no ties under $\bm{a}$, i.e., all rank scores $s_1(\bm{a};q),\cdots,s_n(\bm{a};q)$ are distinct. 

To prove the lemma, we firstly observe that for any two sellers $i_1,i_2\in[n]$, if $s_{i_1}(\bm{a};q)<s_{i_2}(\bm{a};q)$ while $s_{i_1}(\bm{a}^*;q)>s_{i_2}(\bm{a}^*;q)$, then $\delta_{i_1}\frac1{\tau_{i_1}}v_{i_1}(q)=s_{i_1}(\bm{a}^*;q)-s_{i_1}(\bm{a};q)>s_{i_2}(\bm{a}^*;q)-s_{i_2}(\bm{a};q)=\delta_{i_2}\frac1{\tau_{i_2}}v_{i_2}(q)$. Informally speaking, this implies that when the ranking of two sellers are reversed, the corresponding swap of their allocated slots always contributes a positive term to $\Gamma(\bm{a};\bm{a}^*;q)$.

Formally, let $\phi$ and $\phi^*$ each denote the permutation of the $n$ sellers in descending order of the rank scores, under $\bm{a}$ and $\bm{a}^*$, respectively. Specifically, we have $x_{i}(\bm{a};q)=\sum_{j\in[m]}\beta_j\I[\phi(j)=i]$ and $x_{i}(\bm{a}^*;q)=\sum_{j\in[m]}\beta_j\I[\phi^*(j)=i]$, respectively.

Let $\psi$ denote the permutation that transforms $\phi$ into $\phi^*$, i.e., $\phi^*=\psi\circ\phi$. $\psi$ can be decomposed into a sequence of adjacent swaps. Specifically, consider the swaps that occur when applying a bubble sort algorithm on $\phi$ to sort it in descending order of $s_{i}(\bm{a}^*;q)$. This decomposes $\psi$ into a sequence of swaps of adjacent positions $\left((j^{(k)},j^{(k)}+1)\right)_{k=1,\cdots,K}$, where $K\leq\frac{n(n-1)}{2}$ is the total number of swaps.

Let $\phi^{(k)}$ denote the permutation after the $k$-th swap, with $\phi^{(0)}=\phi$ and $\phi^{(K)}=\phi^*$. 
When making every swap $(j^{(k)},j^{(k)}+1)$, let $i_1^{(k)}=\phi^{(k)}(j^{(k)})$ and $i_2^{(k)}=\phi^{(k)}(j^{(k)}+1)$ represent the two sellers whose positions are swapped. 
Then for each $k\in[K]$, we have the conditions $s_{i_1^{(k)}}(\bm{a};q)<s_{i_2^{(k)}}(\bm{a};q)$ and  $s_{i_1^{(k)}}(\bm{a}^*;q)>s_{i_2^{(k)}}(\bm{a}^*;q)$ of the earlier observation, and thus it holds that $\delta_{i_1^{(k)}}\frac1{\tau_{i_1^{(k)}}}v_{i_1^{(k)}}(q)>\delta_{i_2^{(k)}}\frac1{\tau_{i_2^{(k)}}}v_{i_2^{(k)}}(q)$. Now, observe that $\Gamma(\bm{a};\bm{a}^*;q)$ can be written as
\begin{align*}
&\Gamma(\bm{a};\bm{a}^*;q)
=\sum_{k=1}^{K}\left(\beta_{j^{(k)}}-\beta_{j^{(k)}+1}\right)\left(\frac{\delta_{i_1^{(k)}}v_{i_1^{(k)}}(q)}{\tau_{i_1^{(k)}}}-\frac{\delta_{i_2^{(k)}}v_{i_2^{(k)}}(q)}{\tau_{i_2^{(k)}}}\right).
\end{align*}
Here we define $\beta_j=0$ for all $j>m$. Since $\beta_{j^{(k)}}-\beta_{j^{(k)}+1}\geq 0$ and $\frac{\delta_{i_1^{(k)}}v_{i_1^{(k)}}(q)}{\tau_{i_1^{(k)}}}>\frac{\delta_{i_2^{(k)}}v_{i_2^{(k)}}(q)}{\tau_{i_2^{(k)}}}$ hold for all $k\in[K]$, we have $\Gamma(\bm{a};\bm{a}^*;q)\geq 0$.

Moreover, when $E_1$ happens, there must exist some $k_0\in [K]$ that $j^{(k_0)}\leq m$ and $\beta_{j^{(k_0)}}-\beta_{j^{(k_0)}+1}>0$, because otherwise all sellers' obtained exposure rates will remain unchanged. It follows that $\Gamma(\bm{a};\bm{a}^*;q)>0$, i.e., $E_2$ is implied.

This completes the proof.
\end{proof}

\subsection{Properties under Pacing Equilibrium}
With the basic properties of pacing equilibrium, we show the key property of LPA that its allocation outcome under the pacing equilibrium is exactly the optimal allocation maximizing the platform objective.
To state this theorem, we first define some notations:
Define $\mathcal{F}_{A}=\{(x_{i,j})_{i\in[n],j\in[m]}\in[0,1]^{n\times m}:
\forall i\in[n],\sum_{j\in[m]}x_{i,j}\leq 1;~
\forall j\in[m], \sum_{i\in[n]}x_{i,j}\leq 1\}$ as the feasible space of allocations, and define $\mathcal{F}_\beta=\{(x_i)_{i\in[n]}\in\bbR^n:\exists (x_{i,j})_{i\in[n],j\in[m]}\in \mathcal{F}_{A}, x_i=\sum_{j\in[m]}\beta_jx_{i,j}\}$ as the feasible space of allocated exposure rate for the sellers.
\begin{theorem}
\label{thm:platform-objective-optimal}
The allocation induced by any pacing equilibrium is an optimal solution to the following platform objective maximization problem:
\begin{align*}
&\max_{(x_{i,q})_{i\in[n],q\in Q}}\sum_{i\in[n]}\left(\min\left\{\bbE_{q\sim D}\left[\frac1{\tau_i}v_i(q)x_{i,q} \right],\frac{B_i}{T}\right\}+\kappa \bbE_{q\sim D}\left [e_i(q)x_{i,q} \right] \right )\\
&\text{s.t. }\forall q,~(x_{i,q})_{i\in[n]}\in \mathcal{F}_{\beta}.
\end{align*}
\end{theorem}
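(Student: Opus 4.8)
The plan is to make the Lagrangian heuristic of the Theoretical Motivation rigorous, but to use only \emph{weak} duality together with the complementary-slackness structure of the pacing equilibrium, so that no minimax theorem (which would be delicate since the allocation variable is indexed by a possibly continuous request space $Q$) is needed. The starting point is the identity $\min\{z,B_i/T\}=\min_{\lambda\in[0,1]}\bigl(\lambda z+(1-\lambda)B_i/T\bigr)$: for any feasible allocation $x=(x_{i,q})$ (measurable, with $(x_{i,q})_{i\in[n]}\in\mathcal{F}_\beta$ for every $q$) and any $\bm\lambda\in[0,1]^n$, writing $z_i=\E_{q\sim D}[\tfrac1{\tau_i}v_i(q)x_{i,q}]$ and $w_i=\E_{q\sim D}[e_i(q)x_{i,q}]$, the primal objective $\sum_i(\min\{z_i,B_i/T\}+\kappa w_i)$ is at most
\[
L(x,\bm\lambda):=\sum_{i\in[n]}\Bigl(\lambda_i\,\E_{q\sim D}\bigl[\tfrac1{\tau_i}v_i(q)x_{i,q}\bigr]+(1-\lambda_i)\tfrac{B_i}{T}+\kappa\,\E_{q\sim D}\bigl[e_i(q)x_{i,q}\bigr]\Bigr)\le g(\bm\lambda):=\max_{x}L(x,\bm\lambda).
\]
Taking the supremum over feasible $x$ yields the weak-duality bound $\mathrm{OPT}\le g(\bm\lambda)$ for every $\bm\lambda\in[0,1]^n$.

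Next I would identify the inner maximizer. For fixed $\bm\lambda$, $L(x,\bm\lambda)$ equals a constant plus $\E_{q\sim D}\bigl[\sum_i c_i(q)x_{i,q}\bigr]$ with $c_i(q)=\lambda_i\tfrac1{\tau_i}v_i(q)+\kappa e_i(q)$, so it decouples over requests and is maximized by maximizing $\sum_i c_i(q)x_{i,q}$ over $(x_{i,q})_i\in\mathcal{F}_\beta$ for each $q$. Since the $c_i(q)$ are nonnegative and $\beta_1\ge\cdots\ge\beta_m\ge 0$ (the same convention already used in the proof of \Cref{thm:pacing equilibrium exist-unique}), this is a fractional assignment problem whose optimum is the greedy matching pairing the $j$-th largest $c_i(q)$ with slot $j$. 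Setting $\bm\lambda=\bm a$ makes $c_i(q)=a_i\tfrac1{\tau_i}v_i(q)+\kappa e_i(q)$ exactly the LPA rank score under the uniform bid $b_i(q)=a_i\tfrac1{\tau_i}v_i(q)$, so this greedy matching is precisely the LPA allocation $x(\bm a;q)$ (a.s.\ unique by \Cref{assumption:nomass}, with ties of $D$-measure zero). Hence $g(\bm a)=L\bigl(x(\bm a;\cdot),\bm a\bigr)$.

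Finally I would use the equilibrium conditions to make the relaxation tight. Let $\bm a$ be any pacing equilibrium (one exists by \Cref{thm:pacing equilibrium exist-unique}), and abbreviate $z_i=\E_{q\sim D}[\tfrac1{\tau_i}v_i(q)x_i(\bm a;q)]=\E_{q\sim D}[p_i(\bm a;q)]$ and $w_i=\E_{q\sim D}[e_i(q)x_i(\bm a;q)]$. A three-way case analysis on whether $a_i=0$, $a_i=1$, or $a_i\in(0,1)$, invoking the relevant half of \Cref{def:pacing equilibrium} in each case ($z_i\le B_i/T$ when $a_i>0$; $z_i\ge B_i/T$ when $a_i<1$), gives $a_i z_i+(1-a_i)B_i/T=\min\{z_i,B_i/T\}$ in every case; summing over $i$ shows $L(x(\bm a;\cdot),\bm a)=\sum_i(\min\{z_i,B_i/T\}+\kappa w_i)$, i.e. exactly the primal objective value of $x(\bm a;\cdot)$. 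Chaining: primal value of $x(\bm a;\cdot)=L(x(\bm a;\cdot),\bm a)=g(\bm a)\ge\mathrm{OPT}$, while $x(\bm a;\cdot)$ is feasible, so its primal value is $\le\mathrm{OPT}$; hence equality and $x(\bm a;\cdot)$ is optimal. Property 3 of \Cref{thm:pacing equilibrium exist-unique} then extends this to every pacing equilibrium.

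The step I expect to require the most care is the tightness identity $a_i z_i+(1-a_i)B_i/T=\min\{z_i,B_i/T\}$: the interior case $a_i\in(0,1)$ forces $z_i=B_i/T$, but the boundary cases $a_i\in\{0,1\}$ use only one of the two equilibrium inequalities and must be checked separately — this is where the precise definition of pacing equilibrium is doing all the work. The only other point needing a brief justification is that the greedy assignment is the exact maximizer over $\mathcal{F}_\beta$, which rests on $\beta_j$ being non-increasing and the scores nonnegative (both without loss of generality). No single step is technically hard; the content is in assembling weak duality, the greedy inner optimum, and equilibrium-driven tightness into the sandwich.
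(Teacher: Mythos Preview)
Your proposal is correct and follows the same Lagrangian scaffolding as the paper, but runs the argument in the \emph{opposite direction}, which makes it both more elementary and more robust. The paper introduces auxiliary variables $y_i$, invokes strong duality and KKT to obtain optimal dual variables $\lambda^*$, observes that the allocation maximizing the Lagrangian at $\lambda^*$ is the LPA allocation under pacing profile $\lambda^*$, then checks that $\lambda^*$ satisfies \Cref{def:pacing equilibrium}; only after this does Property~3 of \Cref{thm:pacing equilibrium exist-unique} transfer optimality to \emph{every} pacing equilibrium. You instead start from an arbitrary pacing equilibrium $\bm a$, use only weak duality (the elementary identity $\min\{z,B_i/T\}=\min_{\lambda\in[0,1]}(\lambda z+(1-\lambda)B_i/T)$), and verify directly that the equilibrium conditions furnish exactly the complementary slackness needed to close the gap. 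This buys you two things: you sidestep the strong-duality/KKT step, which is the one place where the paper's argument is delicate over a possibly continuous $Q$; and your proof already applies to every pacing equilibrium without any appeal to Property~3 --- your closing sentence invoking it is in fact redundant, since you fixed $\bm a$ to be \emph{any} equilibrium from the outset. The paper's route, on the other hand, has the side benefit of independently re-deriving existence of a pacing equilibrium (the optimal $\lambda^*$ is one), though this is already covered by \Cref{thm:pacing equilibrium exist-unique}.
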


\begin{proof}
The platform objective maximization problem can be written as the following program with a convex feasible region and a linear objective function:
\begin{align}
&\max_{(x_{i,q})_{i\in[n],q\in Q},(y_i)_{i\in[n]}} \sum_{i\in[n]}(y_i+\kappa \bbE_{q\sim D}[e_i(q)x_{i,q}]) \label{eq:optproblem}\\
\text{s.t. }&(x_{i,q})_{i\in[n]}\in \mathcal{F}_{\beta},\ \forall q\in Q;\nonumber\\
&y_i\leq \frac1{\tau_i}\bbE_{q\sim D}[v_i(q)x_{i,q}],\ \forall i\in[n];\nonumber\\
&y_i\leq \frac{B_i}{T},\ \forall i\in[n].\nonumber
\end{align}

We write the Lagrangian form relaxing the constraints about $y_i$:
\begin{align*}
    L(x,y,\lambda,\mu)
    =\sum_{i\in[n]}&\left(\left(1-\lambda_i-\mu_i\right)y_i+\kappa \bbE_{q\sim D}[e_i(q)x_{i,q}]+
    \lambda_i\frac1{\tau_i}\bbE_{q\sim D}[v_i(q)x_{i,q}]+\mu_i\frac{B_i}{T}\right).
\end{align*}

By strong duality, there exists optimal dual variables $\lambda^*,\mu^*$, such that the solution $(x^*,y^*)=\argmax_{x,y}L(x,y,\lambda^*,\mu^*)$ is also the optimal solution of the original problem (\ref{eq:optproblem}). Moreover, by KKT conditions, we have that 
\begin{align*}
    0=\frac{\partial L(x,y,\lambda^*,\mu^*)}{\partial y_i}=1-\lambda_i^*-\mu_i^*,~\forall i\in[n].
\end{align*}
This implies that $\mu_i^*=1-\lambda_i^*$, so we can rewrite
\begin{align*}
L(x,y,\lambda^*,\mu^*)=&\sum_{i\in[n]}\bbE_{q\sim D}\left[\left(\lambda_i^*\frac1{\tau_i}v_i(q)+\kappa e_i(q)\right)x_{i,q}\right]
+\sum_{i\in[n]}(1-\lambda_i^*)\frac{B_i}{T}.
\end{align*}

Observe that since $x^*$ maximizes $L(x,y,\lambda^*,\mu^*)$, it coincides with the allocation that sort the sellers in the descending order $\lambda_i^*\frac1{\tau_i}v_i(q)+\kappa e_i(q)$, i.e., the allocation of LPA when the bids are given by $b_i(q)=\lambda^*\frac1{\tau_i}v_i(q)$, which happens exactly when $\lambda^*$ is viewed as the pacing profile.
Formally, we have $x_i(\lambda^*;q)=x^*_{i,q}$ and $p_i(\lambda^*;q)=\frac1{\tau_i}v_i(q)x^*_{i,q}$.

Next, we prove that, if we view $\lambda^*$ as a pacing profile, it forms a pacing equilibrium. We check the conditions in \Cref{def:pacing equilibrium}. 

(1) If $\lambda^*_i>0$, then by KKT conditions, both $y_i^*=\bbE_{q\sim D}[\frac1{\tau_i}v_i(q)x^*_{i,q}]$ and $y_i^*\leq \frac{B_i}{T}$ hold. It follows that $\bbE_{q\sim D}[\frac1{\tau_i}v_i(q)x^*_{i,q}]\leq \frac{B_i}{T}$.

(2) If $\lambda^*_i<1$, then $\mu^*_i>0$. By KKT conditions, both $y_i^*=\frac{B_i}{T}$ and $y_i^*\leq \bbE_{q\sim D}[\frac1{\tau_i}v_i(q)x^*_{i,q}]$ hold, and then it follows that $\bbE_{q\sim D}[\frac1{\tau_i}v_i(q)x^*_{i,q}]\geq \frac{B_i}{T}$.

Recall that $p_i(\lambda^*;q)=\frac1{\tau_i}v_i(q)x^*_{i,q}$, then by \Cref{def:pacing equilibrium}, $\lambda^*$ is a pacing equilibrium.

By property 3 in \Cref{thm:pacing equilibrium exist-unique}, all pacing equilibria induce the same allocation and payment outcomes in all requests. Particularly, the induced allocation will be $(x^*_{i,q})_{i\in[n],q\in Q}$, i.e., the optimal solution maximizing the platform objective.
\end{proof}

In the following theorem, we show that the joint system of LPA and uniform bidding is incentive compatible for stock-constrained value maximizing sellers, incentivizing them to truthfully report both budget and ROI constraints, under the reasonable assumption that the pacing equilibrium is achieved.

\begin{theorem}\label{thm:IC-for-stock-constrained}
Under the stock-constrained value maximizer model, for each seller $i$, truthfully reporting both budget $B_i$ and ROI $\tau_i$ is always optimal for the seller's target under the pacing equilibrium.
\end{theorem}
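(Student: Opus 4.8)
The plan is to compress any report $(\tilde B_i,\tilde\tau_i)$ into the single quantity that actually governs seller $i$'s outcome --- the \emph{effective coefficient} $c:=a_i/\tilde\tau_i\ge 0$, where $a_i$ is the pacing factor seller $i$'s auto-bidder settles on at the new pacing equilibrium induced by the deviation. At that equilibrium seller $i$ bids the fixed function $b_i(q)=c\,v_i(q)$, and the other auto-bidders are in pacing equilibrium against it; hence by uniqueness of the equilibrium outcome (Property 3 of \Cref{thm:pacing equilibrium exist-unique}) the induced allocation equals that of the auxiliary instance in which seller $i$ instead reports budget $+\infty$ and ROI $1/c$ --- a report that forces seller $i$'s pacing to $1$ and thus to the same bid function. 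Write $x^{(c)}$ for this unique allocation and $W(c):=\E_{q\sim D}[v_i(q)x^{(c)}_i(q)]$. Under the deviation, seller $i$'s realized target value is $\min\{B_i/T,\,W(c)/\tau_i\}$, and, whenever $a_i>0$, her expected per-round payment is $W(c)/\tilde\tau_i$ (LPA charges $\frac{1}{\tilde\tau_i}v_i(q)x_i(q)$ per request, and the pacing-equilibrium budget condition keeps the total within budget, so truncation does not bite; the cases $a_i=0$ and $W(c)=0$ are trivial).

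Two facts then do the work. First, \emph{admissibility forces $c\le 1/\tau_i$}: if the deviation's outcome satisfies seller $i$'s true ROI constraint then $W(c)/\tilde\tau_i\le\frac{1}{\tau_i}W(c)$, so $\tilde\tau_i\ge\tau_i$ (when $W(c)>0$), and since $a_i\le 1$ this gives $c\le 1/\tilde\tau_i\le 1/\tau_i$. Second, \emph{$W$ is weakly increasing in $c$}: applying \Cref{thm:platform-objective-optimal} to the auxiliary instance, $x^{(c)}$ maximizes $c\,W(x)+G(x)$ over the parameter-free feasible region, where $W(x):=\E_{q\sim D}[v_i(q)x_{i,q}]$ is linear in $x$ and $G(x):=\kappa\sum_{i'\in[n]}\E_{q\sim D}[e_{i'}(q)x_{i',q}]+\sum_{i'\ne i}\min\{\frac{1}{\tau_{i'}}\E_{q\sim D}[v_{i'}(q)x_{i',q}],B_{i'}/T\}$ does not involve $c$; summing the optimality inequalities of $x^{(c_1)}$ and $x^{(c_2)}$ for $c_1<c_2$ and cancelling the $G$-terms yields $(c_2-c_1)\big(W(x^{(c_2)})-W(x^{(c_1)})\big)\ge 0$, and uniqueness of the outcome makes $W(\cdot)$ well defined.

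It remains to evaluate truthful reporting and conclude. Truthful report $(B_i,\tau_i)$ induces equilibrium pacing $a_i^{T}$, effective coefficient $c^{T}=a_i^{T}/\tau_i\le 1/\tau_i$, and expected payment $W(c^{T})/\tau_i$, so the true ROI constraint (and, via the truncation safety net, the true budget constraint) holds --- truthful is admissible, with target value $\min\{B_i/T,W(c^{T})/\tau_i\}$. If $a_i^{T}<1$, the pacing-equilibrium conditions force $W(c^{T})/\tau_i=B_i/T$ (or $\ge B_i/T$ in the extreme case $a_i^{T}=0$), so seller $i$'s target value equals $B_i/T$, which upper-bounds the target value of \emph{any} report; truthful is optimal. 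If $a_i^{T}=1$, then $c^{T}=1/\tau_i$, so by the admissibility bound every admissible deviation has $c\le 1/\tau_i=c^{T}$, whence by monotonicity $W(c)\le W(c^{T})$ and the deviation's target value $\min\{B_i/T,W(c)/\tau_i\}$ does not exceed truthful's. The main obstacle is the monotonicity step, and specifically arranging the reduction so that comparative statics applies: one must route the deviation through the $+\infty$-budget auxiliary instance precisely so that the ``everything else'' term $G$ in \Cref{thm:platform-objective-optimal}'s program is genuinely $c$-independent, after which monotonicity is a one-line supermodularity computation; the remaining work is bookkeeping over the corner cases ($a_i\in\{0,1\}$, $W(c)=0$, $\tilde B_i\gtrless B_i$).
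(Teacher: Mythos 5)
Your proposal is correct, and its skeleton coincides with the paper's own proof: both arguments reduce everything to the seller's \emph{effective bid coefficient} $c=a_i/\tilde\tau_i$ and split on whether the truthful equilibrium has $a_i^{T}<1$ (budget binds, so the target already equals its cap $B_i/T$ and no report can beat it) or $a_i^{T}=1$ (so $c^{T}=1/\tau_i$, and the true ROI constraint forces any admissible deviation to have $c\le 1/\tau_i$). The genuine difference is in the step ``lower effective coefficient $\Rightarrow$ weakly lower obtained value.'' The paper asserts this implicitly --- in its case (a) it states that a strictly improving deviation \emph{must} satisfy $a_i'/\tau_i'>1/\tau_i$ --- without justification, and the claim is not pointwise obvious: when seller $i$ lowers her bids, the other auto-bidders re-equilibrate and (by the monotonicity of $f_j$) may lower theirs as well, which pushes seller $i$'s allocation in the opposite direction. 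Your detour through the infinite-budget auxiliary instance, combined with uniqueness of the equilibrium outcome (\Cref{thm:pacing equilibrium exist-unique}) and the variational characterization of the equilibrium allocation (\Cref{thm:platform-objective-optimal}), turns this into a clean supermodularity computation $(c_2-c_1)\bigl(W(x^{(c_2)})-W(x^{(c_1)})\bigr)\ge 0$ that correctly accounts for re-equilibration. That is a real strengthening of the paper's argument rather than a redundancy. Two small points to tidy: (i) \Cref{thm:platform-objective-optimal} is stated for finite budgets, so either take $\tilde B_i$ finite but larger than any achievable payment in the auxiliary instance, or note that the constraint $y_i\le B_i/T$ can simply be dropped (forcing $\mu_i^*=0$, $\lambda_i^*=1$) without affecting strong duality; (ii) the corner case $c=0$ should be handled by running the same two optimality inequalities with $c_1=0$ directly (seller $i$'s term in the program vanishes and only the $\kappa e_i$ part of $G$ remains) rather than via a report with ROI $1/c=\infty$.
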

\begin{proof}
For any seller $i\in[n]$, fix the budget and ROI reported by other sellers, let $\bm{a}$ denote the pacing equilibrium when seller $i$ truthfully reports $B_i,\tau_i$, and let $\bm{a}'$ denote the pacing equilibrium when seller $i$ reports any other $B_i',\tau_i'$. We show that seller $i$'s target is weakly better under $\bm{a}$ than under $\bm{a}'$. We discuss two cases:

(a) $\bbE_{q\sim D}[p_i(\bm{a};q)]<B_i/T$. In this case, it must hold that $a_i=1$, which leads to the bidding strategy of $b_i(q)=\frac1{\tau_i}v_i(q)$. Suppose for contradiction that $\bm{a}'$ strictly improves seller $i$'s target, then the induced bidding strategy $b_i'(q)=a_i'\frac1{\tau_i'}v_i(q)$ must satisfy that $\smash{a_i'\frac1{\tau_i'}>\frac1{\tau_i}}$. As $a_i'\leq 1$, we have $\smash{\frac1{\tau_i'}>\frac1{\tau_i}}$. However, by the payment rule of LPA, the payment when reporting $\tau_i'$ is $\bbE_{q\sim D}[p_i'(\bm{a}';q)]=\frac1{\tau_i'}\bbE_{q\sim D}[x_i(\bm{a};q)v_i(q)]>\frac1{\tau_i}\bbE_{q\sim D}[x_i(\bm{a};q)v_i(q)]$, so this outcome violates seller $i$'s ROI constraint, which contradicts.

(b) $\bbE_{q\sim D}[p_i(\bm{a};q)]\geq B_i/T$. In this case, by LPA's payment rule, we have $\bbE_{q\sim D}[x_i(\bm{a};q)v_i(q)]\geq\tau_iB_i/T$, i.e., the obtained sales reaches the stock limit, which is already the best possible outcome for stock-constrained seller $i$.

This completes the proof.
\end{proof}

\subsection{Computability of Pacing Equilibrium}
Generally, the existence of a pacing equilibrium does not guarantee that the auto-bidding system will converge to it. However, we show that the pacing equilibrium in LPA is computationally tractable given the distribution, which informally suggests that the auto-bidding system is likely to converge to the pacing equilibrium.
\begin{definition}[$\epsilon$-pacing equilibrium]
    A pacing profile $\bm{a}\in[0,1]^n$ is called an $\epsilon$-pacing equilibrium, if for each $i\in[n]$, $a_i$ satisfies that:
\begin{enumerate}
\item If $a_i>\epsilon$, then $\E_{q\sim D}\left[p_{i}\left(a_i-\epsilon,a_{-i};q\right)\right]\leq \frac{B_i}{T} $.
\item If $a_i<1-\epsilon$, then $\E_{q\sim D}\left[p_{i}\left(a_i+\epsilon,a_{-i};q\right)\right]\geq \frac{B_i}{T}$.
\end{enumerate}
\end{definition}

\begin{algorithm}[t]
\SetAlgoLined
\caption{Computing Pacing Equilibrium}

\label{alg:simplified}

Initialize $\bm{a} \leftarrow 1^n$\;

 \While{$\exists i\in[n],~a_i>\epsilon\land \E_{q\sim D}\left[p_{i}\left(a_i-\epsilon,a_{-i};q\right)\right]>\frac{B_i}{T}$}{
    Update $a_i\gets \min\{a_i'\geq 0:\E_{q\sim D}\left[p_{i}\left(a_i',a_{-i};q\right)\right]\geq\frac{B_i}{T}\}$\;
 }
 Output $\bm{a}$\;
\end{algorithm}

\begin{theorem}
\label{thm:simple alg convergence}
When the request distribution $D$ is known, \Cref{alg:simplified} computes an $\epsilon$-pacing equilibrium in at most $n\epsilon^{-1}$ iterations.
\end{theorem}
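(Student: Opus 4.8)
The plan is to track a potential function that decreases by a fixed amount at every iteration, combined with a monotonicity invariant guaranteeing that the loop never ``overshoots.'' Concretely, I would first establish the invariant that the pacing profile $\bm{a}$ maintained by \Cref{alg:simplified} is always point-wise weakly \emph{above} the point-wise maximum $\epsilon$-pacing equilibrium (or, more carefully, stays in a region from which the true equilibrium is reachable only by decreasing coordinates). This follows the same logic as the set $\Omega$ in the proof of \Cref{thm:pacing equilibrium exist-unique}: initializing at $1^n$ starts above everything, and each update only decreases a coordinate, and only decreases it to the smallest value consistent with seller $i$'s budget given the current $a_{-i}$. Using the monotonicity of $\E_{q\sim D}[p_i(a_i,a_{-i};q)]$ in each coordinate (payment weakly decreasing in $a_{-i}$, weakly increasing in $a_i$), one argues that the new coordinate value never drops below the corresponding coordinate of any $\epsilon$-pacing equilibrium lying below the current profile.

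Next I would quantify progress. Each coordinate $a_i$ lives in $[0,1]$; the key claim is that whenever coordinate $i$ is selected for an update, it decreases by at least $\epsilon$. This is exactly what the loop guard enforces: the loop picks $i$ only when $a_i > \epsilon$ and $\E_{q\sim D}[p_i(a_i-\epsilon,a_{-i};q)] > B_i/T$, i.e. even shaving $\epsilon$ off $a_i$ still leaves the payment strictly above budget, so the new value $\min\{a_i' \ge 0 : \E_{q\sim D}[p_i(a_i',a_{-i};q)] \ge B_i/T\}$ must be strictly less than $a_i - \epsilon$. Hence the total decrease accumulated across all coordinates over all iterations is at least $\epsilon$ times the number of iterations, while the total decrease available is at most $\sum_{i\in[n]} (1 - 0) = n$ — but here I need care, because other coordinates can only \emph{decrease} over the run (never increase), so the budget of total movement is genuinely capped by $n$. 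Wait — actually a subtlety: when $a_i$ is lowered, this can only \emph{raise} the other sellers' payments, never lower them, so no coordinate is ever revisited-and-raised; each coordinate is monotonically non-increasing throughout, and the sum $\sum_i a_i$ starts at $n$ and stays nonnegative. Since each iteration strictly decreases $\sum_i a_i$ by more than $\epsilon$, there are at most $n\epsilon^{-1}$ iterations.

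Finally I would check that the output is genuinely an $\epsilon$-pacing equilibrium. Termination means that for every $i$ with $a_i > \epsilon$, we have $\E_{q\sim D}[p_i(a_i-\epsilon,a_{-i};q)] \le B_i/T$, which is condition (1) of the definition. For condition (2), I would use that whenever $a_i$ was last updated it was set to the \emph{smallest} value making the payment reach $B_i/T$, so $\E_{q\sim D}[p_i(a_i,a_{-i};q)] \ge B_i/T$ at that moment; since all other coordinates only decreased afterward, the payment $\E_{q\sim D}[p_i(a_i,a_{-i};q)]$ only increased, so it still satisfies $\ge B_i/T$, and by monotonicity $\E_{q\sim D}[p_i(a_i+\epsilon,a_{-i};q)] \ge B_i/T$ as well; for coordinates never updated, $a_i = 1 > 1-\epsilon$ so condition (2) is vacuous. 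I expect the main obstacle to be the bookkeeping in the progress argument — specifically, making fully rigorous that coordinates are monotonically non-increasing so that the $\sum_i a_i$ potential is legitimately bounded, and handling the continuity/well-definedness of the $\min$ in the update step (which relies on \Cref{assumption:nomass} ensuring $\E_{q\sim D}[p_i(\cdot,a_{-i};q)]$ is continuous, hence the infimum is attained).
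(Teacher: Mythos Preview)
Your proposal is correct and follows essentially the same approach as the paper: the potential $\sum_i a_i$ drops by at least $\epsilon$ per iteration (giving the $n\epsilon^{-1}$ bound), and the monotonicity invariant that any coordinate ever updated retains $\E_{q\sim D}[p_i(\bm{a};q)]\geq B_i/T$ (since lowering other coordinates only raises $i$'s payment) delivers condition~(2), while termination directly gives condition~(1). The first paragraph's invariant about staying above the point-wise maximum $\epsilon$-pacing equilibrium is extra and never actually used in your argument; you can drop it.
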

\begin{proof}
Observe that each iteration will decrease some $a_i$ for at least $\epsilon$, with $\sum_{i}a_i=n$ initially and $\sum_ia_i\geq 0$ at the end, so the algorithm terminates in at most $\frac{n}{\epsilon}$ iterations.

By the monotonicity of payment, when updating some $a_i$, $p_i(\bm{a};q)$ is non-increasing, while $p_j(\bm{a};q)$ is non-decreasing for all $j\neq i$. 
Therefore, it holds at any moment of the algorithm that, for all $i\in[n]$, either $a_i=1$, or $\E_{q\sim D}\left[p_{i}\left(\bm{a};q\right)\right]\geq \frac{B_i}{T}$
So when the algorithm ends, $a_i<1$ implies $\E_{q\sim D}\left[p_{i}\left(\bm{a};q\right)\right]\geq \frac{B_i}{T}$. Moreover, by the ending condition, $a_i>\epsilon$ implies $\E_{q\sim D}\left[p_{i}\left(a_i-\epsilon,a_{-i};q\right)\right]\leq \frac{B_i}{T}$. That is, the output $\bm{a}$ is an $\epsilon$-pacing equilibrium.
\end{proof}

\section{Experiment Results}
In this section, we evaluate the performance of LPA under the QuanZhanTui scenario in comparison with conventional auctions. All auctions share the same allocation rule as LPA, and differ only in the payment rules.
We answer the following questions: 
\emph{How does it perform in improving Pareto frontiers?} and \emph{How close is its performance to the theoretical optimum in an auto-bidding system?}
\subsection{Simulation Implementation}
To simulate the outcome of the auction process under different auctions, we consider both online and offline settings. In the online setting, we simulate the repeated auction process over the sequence of requests, where each seller's bids are controlled by a pacing algorithm. We initialize the pacing factors with $\smash{a_i^{(0)}}=1$, and update them using the following rule:
$$
a_i^{(t)}=\max\left(0,\min \left (1,a_i^{(t-1)}+\eta\left (\frac{B_i}{T}-p_i^{(t)}\right)\right)\right),
$$
where $\eta$ is the learning rate, set to $0.01$. This rule aims to control the average spending rate to $B_i/T$.
We observe that this update rule results in stable adjustments of pacing factors and effective budget control.
In the offline setting, we assume access to all the requests, and directly compute the pacing equilibrium. We evaluate the auction outcomes under the pacing equilibrium, where all bids are set according to the equilibrium pacing factors.

\subsection{Datasets}
We simulate the auction process on both synthetic and real-world datasets, each consisting of the sellers' budget and ROI constraints, the sales quantity $v_i(q_t)$ and the user experience $e_i(q_t)$ in the requests, as well as other environment parameters.
\looseness=1

We generate the synthetic dataset with $n=10$ sellers and $T=10^5$ user requests, where each request has $m=3$ slots. The exposure rates are set as $\smash{(\beta_1,\beta_2,\beta_3)=(1,0.6,0.4)}$. 
For each seller $i\in[n]$, we generate the budget constraints $B_i$ independently from $U[0,10^5]$.
ROI constraints $\tau_i$ are generated so that $\frac{1}{\tau_i}$ independently follows the distribution $U[0,1]$.
We set $\kappa=1$ and generate the expected sales quantity $v_i(q_t)$ and user experience $e_i(q_t)$ as follows. We first generate a product feature vector $\smash{C^{p}_{i}\in\bbR^{d_C}}$ for each seller $i\in[n]$ and a user feature vector $\smash{C^{u}_{t}\in\bbR^{d_C}}$ for each request $q_t$, where $d_C$ is set to $10$. Each $\smash{C^{p}_{i}}$ and $\smash{C^{u}_{t}}$ follows the standard multivariate normal distribution, i.e., each component is independently drawn from the distribution $\mathcal{N}(0,1)$. Then we set $\smash{v_i(q_t)=\exp(1/(\sqrt{d_C})\langle C^{p}_{i},C^{u}_{t}\rangle)}$, where $\smash{\langle C^{p}_{i},C^{u}_{t}\rangle}$ is the inner product of $\smash{C^{p}_{i}}$ and $\smash{C^{u}_{t}}$. We finally draw independently from $U[0,1]$ two parameters $r^{i,t}_1$ and $r^{i,t}_2$ and set $e_i(q_t)=\frac12(r^{i,t}_1v_i(q_t)+ 0.2r^{i,t}_2)$. The expectations of $v_i(q_t)$, $e_i(q_t)$ are $0.8$, $0.4$ respectively.
\looseness=1

We synthesize a real-world dataset from the QuanZhanTui service on an e-commerce platform. $n=50$ sellers and $T=10^5$ requests are sampled from the real-world data distribution. Specifically, we sample budget and ROI constraints from the real data. The budgets are then scaled to maintain the average budget per request. For each request $q_t$, we generate $v_i(q_t)$ and $e_i(q_t)$ by sampling the predicted GMV and CTR from the real auctions. We set $\kappa=0.5$. Each request has $m=6$ slots with exposure rates $(\beta_1,\cdots,\beta_6)=(1,0.8,0.7,0.6,0.5,0.4)$.

\begin{table*}[ht]
    \centering
    \begin{tabular}{|c|cccc|}
\hline\textbf{Auction}&$\LW$&$\kappa\cdot\UE$&$\OBJ$&$\REV$\\
\hline LPA (offline)& 315608 (100.0\%)& 221100 (100.0\%)& 536709 (100.0\%)& 315608  (100.0\%)\\
\hline LPA & 313901 (99.5\%)& 221428 (100.1\%)& 535329 (99.7\%)& 313901  (99.5\%)\\
\hline GFP & 296417 (93.9\%)& 219863 (99.4\%)& 516280 (96.2\%)& 292670  (92.7\%)\\
\hline GSP & 284041 (90.0\%)& 216508 (97.9\%)& 500550 (93.3\%)& 167262  (53.0\%)\\
\hline VCG & 281969 (89.3\%)& 216156 (97.8\%)& 498125 (92.8\%)& 119574 (37.9\%)\\
\hline
    \end{tabular}
    \begin{tabular}{|c|cccc|}
\hline\textbf{Auction}&$\LW$&$\kappa\cdot\UE$&$\OBJ$&$\REV$\\
\hline LPA (offline)& 76913 (100.0\%)& 33084 (100.0\%)& 109998 (100.0\%)& 76913 (100.0\%)\\
\hline LPA& 76699 (99.7\%)& 33055 (99.9\%)& 109754 (99.8\%)& 76699 (99.7\%)\\
\hline GFP& 73667 (95.8\%)& 31791 (96.1\%)& 105459 (95.9\%)& 73106 (95.1\%)\\
\hline GSP& 71368 (92.8\%)& 30380 (91.8\%)& 101748 (92.5\%)& 45557 (59.2\%)\\
\hline VCG& 69128 (89.9\%)& 29053 (87.8\%)& 98181 (89.3\%)& 32865 (42.7\%)\\
\hline
    \end{tabular}
    \caption{Experiment results under synthetic (top) and real-world (bottom) data}
    \label{tab:tabs-synth-and-real}
\end{table*}

\begin{figure}[ht]
    \centering
    \sffamily\fontfamily{phv}\selectfont
    
    \begin{tikzpicture}
        \begin{axis}[
            width=0.49\linewidth, 
            height=0.4\linewidth, 
            xlabel={$\LW$ (\(\times 10^5 \))},
            ylabel={$\UE$ (\( \times 10^5 \))},
            legend pos=south west,
            legend style={font=\scriptsize},
            tick label style={font=\sffamily\scriptsize},
            label style={font=\sffamily\small},
            grid=both,
            every axis plot/.append style={thick},
        ]
        \addplot[mark=*, color=blue] coordinates{(3.2726023462181155, 1.833232552469008) (3.2683725333935825, 1.898999276961077) (3.2515634452567213, 1.9988586146247667) (3.2116642482965063, 2.114744915233896) (3.1429622693866723, 2.213671958002078) (3.0362065471543755, 2.2906810234915977) (2.910905762285014, 2.3359780800413326) (2.8054428445257953, 2.3554332127131357) (2.7231750880945533, 2.362987093459028)};
        \addplot[mark=square*, color=red] coordinates{(3.0035462442391827, 1.9389270958431035) (3.0048477026492573, 1.971082025375876) (3.003472958232883, 2.0264966559878816) (2.9954980664043354, 2.105696116977009) (2.964887360373329, 2.198281465204162) (2.9074873608279326, 2.277599089039408) (2.8359810070892526, 2.328440480531243) (2.7666324574718795, 2.353074058613263) (2.699976667570361, 2.362473190446565)};
        \addplot[mark=diamond*, color=olive] coordinates{(2.8652883544711165, 1.9181407987265064) (2.8674221351135025, 1.9492233021402874) (2.866849011274124, 1.9978815322317713) (2.858611457450816, 2.072638452802528) (2.8390966765559047, 2.1644744049970153) (2.807697610161592, 2.251656680145055) (2.7723790674481337, 2.3149229782710905) (2.723722323204054, 2.3477456407838617) (2.6741206974811167, 2.360759511867456)};
        \addplot[mark=triangle*, color=orange] coordinates{(2.8294276211148874, 1.9139746193761518) (2.8310769729184035, 1.9437925073179727) (2.8307328421359665, 1.9928523197693422) (2.829175632740959, 2.067758590550262) (2.8204249702113064, 2.1605377838614666) (2.7976265150728192, 2.250885282332995) (2.7675691310475523, 2.3143744535471686) (2.7216872297977335, 2.3477104099636747) (2.6731351468426245, 2.3607503490143564)};
        \legend{LPA, GFP, GSP, VCG}
        \end{axis}
    \end{tikzpicture}
    \quad
    \begin{tikzpicture}
        \begin{axis}[
            width=0.49\linewidth, 
            height=0.4\linewidth, 
            xlabel={$\LW$ (\(\times 10^5 \))},
            ylabel={$\UE$ (\( \times 10^5 \))},
            legend pos=south west,
            legend style={font=\scriptsize},
            tick label style={font=\sffamily\scriptsize},
            label style={font=\sffamily\small},
            grid=both,
            every axis plot/.append style={thick},
        ]
        \addplot[mark=*, color=blue] coordinates{(8.0642581604252, 4.611992992152919) (8.00146951231115, 5.321230001115215) (7.877449525519027, 6.01729506147668) (7.6701239250734945, 6.610480420628331) (7.36393589058886, 7.049700658071262) (6.967901855915976, 7.33381657597156) (6.531112530823492, 7.49253421805182) (6.104190397640927, 7.57088381280889) (5.73876487373856, 7.604678676782458)};
        \addplot[mark=square*, color=red] coordinates{(7.600318683194279, 4.644609801838695) (7.571150376442126, 5.116329417887612) (7.503075083760006, 5.729110136711334) (7.3669497008946925, 6.35680389089446) (7.133753589663068, 6.879778254033365) (6.813551756961686, 7.2382523777457335) (6.4314431601546485, 7.446703894319453) (6.047949625896153, 7.551194173396159) (5.710161569231444, 7.597081371036568)};
        \addplot[mark=diamond*, color=olive] coordinates{(7.331972883147591, 4.534673831989631) (7.308996654049025, 4.9131934147569325) (7.253520914132302, 5.452652141476923) (7.135459115339078, 6.076446668515019) (6.937991908130025, 6.645803495503979) (6.653229284101126, 7.074204590230013) (6.309741665643145, 7.346224892214869) (5.964658730807578, 7.495877451529348) (5.655162880156644, 7.5699950554459186)};
        \addplot[mark=triangle*, color=orange] coordinates{(7.065474594397086, 4.436981577573862) (7.046666161615656, 4.744063819042774) (7.002832579544309, 5.215046442279669) (6.914228324677376, 5.8126341621404265) (6.7537017750697546, 6.4161007445872045) (6.516087583573859, 6.91359078486924) (6.220275332500375, 7.252780425835937) (5.912120221326528, 7.449102398233849) (5.629305253040645, 7.551022206928958)};
        \legend{LPA, GFP, GSP, VCG}
        \end{axis}
    \end{tikzpicture}
    
    \caption{Pareto frontier analysis with synthetic (left) and real-world (right) datasets.}
    \label{fig:combined-vertical}
\end{figure}
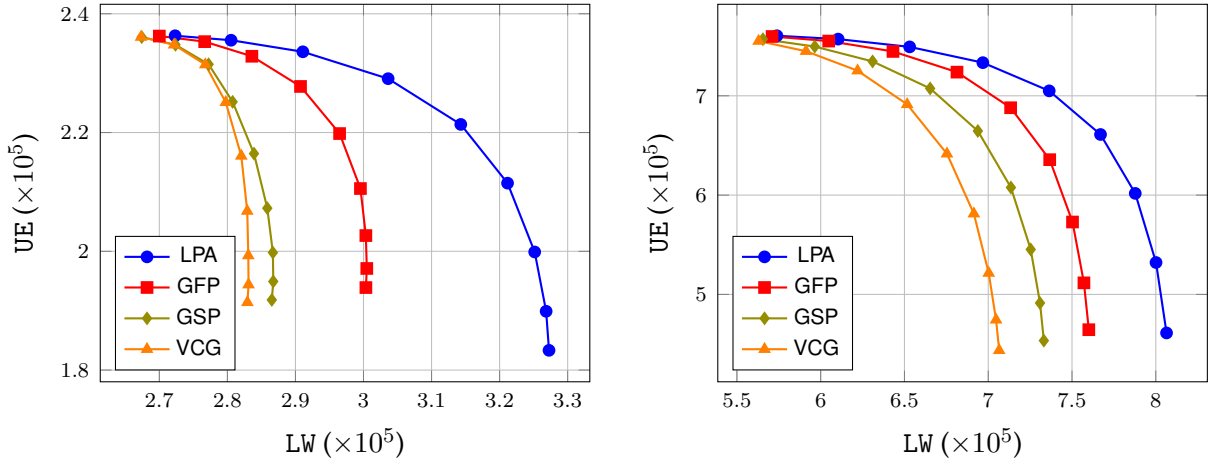
\subsection{Results}
We compare LPA with three baseline auctions: VCG \cite{RePEc:bla:jfinan:v:16:y:1961:i:1:p:8-37vcg}, GSP \cite{DBLP:conf/www/WilkensCN17} and GFP \cite{DBLP:journals/mor/DuttingFP19}. The evaluations focus on four key metrics: platform objective ($\OBJ$), liquid welfare ($\LW$), user experience ($\UE$), and revenue ($\REV$). We generate 10 independent versions of each dataset and report the average values of each metric.

\textbf{Performance Evaluation.}
Results are presented in \Cref{tab:tabs-synth-and-real}. For the online setting, we present the performance of all auctions. For the offline setting, we report the performance of LPA, which represents the theoretically optimal platform objectives. We compare the results of each auction with those achieved by LPA under pacing equilibrium. Since we observe that the offline results are consistently similar to the online ones (with $<1\%$ difference), we omit the results of baseline auctions in the offline setting. We observe that LPA consistently outperforms all baseline auctions in all four metrics. Notably, LPA simultaneously improves both $\LW$ and $\UE$ compared to baseline auctions.
In addition, LPA gains a significant improvement in $\REV$. This is favorable to e-commerce platforms that seek to optimize platform profitability. Moreover, the performance of LPA in the online setting is close to the theoretically optimal in the offline setting, with at most a $0.5\%$ discrepancy in all metrics. This indicates that LPA is highly compatible with auto-bidding systems, generating near-optimal outcomes even when paired with simple auto-bidding algorithms.

\textbf{Pareto Frontier Analysis.}
We also examine LPA's improvement of the Pareto frontier. Specifically, we vary $\kappa$ over the range: $2^{-4}, 2^{-3}, \cdots, 2^{4}$ and simulate the auction process in the online setting. The results are presented in \Cref{fig:combined-vertical}. For each auction, we plot a line where each point corresponds to $\LW$ and $\UE$ achieved for a particular $\kappa$. These lines visualize the trade-off between the two metrics when $\kappa$ changes, with larger $\kappa$ generally leading to larger $\UE$ and smaller $\LW$. We observe that when $\kappa$ is large, all auctions achieve a similar $\UE$, while LPA has a slightly better $\LW$. As $\kappa$ decreases, the differences in $\LW$ become more apparent. The results on both the synthetic and real-world datasets show that LPA consistently improves the Pareto frontier.

\section{Conclusion}
In this paper, we study the mechanism design problem for the platform-wide marketing service QuanZhanTui. We propose the stock-constrained seller model and the Liquid Payment Auction (LPA). Through theoretical analysis, we establish key properties of the pacing equilibrium induced by LPA in a uniform bidding environment, such as existence and computability. Furthermore, we demonstrate that LPA optimally allocates platform-wide traffic under this equilibrium and incentivizes sellers to truthfully report their constraints. Our experimental evaluation results further validate the superior performance of LPA across multiple aspects including liquid welfare, user experience, and platform revenue, as well as its compatibility to practical pacing-based auto-bidding systems. We note that this work not only advances the theoretical understanding of the auction design in QuanZhanTui, but also offers a practical solution for industrial implementations. We hope that this research contributes valuable information to both academia and industry.

\section*{Acknowledgements}
This work is supported by the National Natural Science Foundation of China (Grant No. 62172012), supported by Alibaba Group through the Alibaba Innovative Research Program, and supported by Peking University-Alimama Joint Laboratory of AI Innovation. 
The authors thank Zhaohua Chen for helpful discussions and all anonymous reviewers for their helpful feedback.
\newpage

\bibliography{ref}
\bibliographystyle{plain}

\end{document}